\DeclareMathOperator{\sgn}{\operatorname{sgn}}
\title{\LARGE \bf
A New Type of Nonlinear Disturbance Rejection
}
\author{Simon Kuang$^{1}$ and Xinfan Lin$^{1}$
\thanks{*This work was supported by the National Science Foundation CAREER Program (Grant No. 2046292)}
\thanks{$^{1}$ Department of Mechanical and Aerospace Engineering, University of California, Davis; Davis, CA 95616, USA. {\ttfamily \{slku, lxflin\}@ucdavis.edu}}}%
\newtheorem{proposition}{Proposition}
\newtheorem{definition}{Definition}
\newtheorem{theorem}{Theorem}
\newtheorem{remark}{Remark}
\newtheorem{corollary}{Corollary}
\newenvironment{after}{}{}
\DeclareMathOperator{\arsinh}{arsinh}
   \newenvironment{skippedproof}[1]{%
      \begin{proof}{#1}%
   }{%
      \end{proof}%
   }%
\begin{document}

\maketitle
\thispagestyle{empty}
\pagestyle{empty}

\begin{abstract}
Asymptotic disturbance rejection (equivalently tracking) for nonlinear systems has been studied only in qualitative terms (the state is asymptotically stable under bounded disturbances).
We show how to prove quantitative performance guarantees for the nonlinear servomechanism problem.
Our technique originates by applying a gain inequalities point of view to an \emph{ad fontes} reexamination of the linear problem:
what is the nonlinear equivalent of a sensitivity transfer function with a zero at the origin?
We answer: a nonlinear input-output system is high-pass if its output is stable with respect to the \emph{derivative} of the input.
We first show that definition generalizes high-pass resistor-capacitor circuit analysis to accommodate nonlinear resistors.
We then show that this definition generalizes the steady-state disturbance rejection property of integral feedback controllers for linear systems.
The theoretical payoff is that low-frequency disturbance rejection is captured by a quantitative, non-asymptotic output cost bound.
Finally, we raise theoretical questions about compositionality of nonlinear operators.
\end{abstract}

\section{Introduction}
A \emph{high-pass} transfer function is one that vanishes at the origin.
This algebraic characterization applies to continuous-time, linear time-invariant input-output systems.
We know of no analog in the practically and theoretically relevant province of continuous-time nonlinear input-output systems.

This theoretical gap invites basic contemplation in rational mechanics, which can be intrinsically interesting.
Moreover, the analytical methods laid out in this paper are able to provide quantitative and physically meaningful answers to the nonlinear generalizations of the following linear problems:
\begin{itemize}
        \item \textbf{linear} What is the power consumption of a resistor-capacitor circuit under an AC voltage source?
        (Can be solved using transfer function methods.)

        \textbf{nonlinear} What is the power consumption of a diode-capacitor circuit under an AC voltage source?
        (Solved in \S\ref{sec:nonlinear_high_pass_example}.)

        \item \textbf{linear} Synthesize a closed-loop linear controller with steady-state disturbance rejection, and guarantee it in a given given system norm.
        (Can be solved using convex linear system parameterization techniques.)

        \textbf{nonlinear} Synthesize a closed-loop nonlinear controller with steady-state disturbance rejection, and guarantee it in a given system norm.
        (Solved in \S\ref{sec:robust_control}.)
\end{itemize}


\subsection{Related work}
\subsubsection{Systems theory}
Our characterization of nonlinear high-pass behavior combines an \(L^p\) form of Barbalat's lemma \cite{teel_asymptotic_1999}
with dissipativity theory, a set of Lyapunov-like techniques to establish input-output properties of nonlinear systems \cite{willems_dissipative_1972} \cite[Chapter 4]{brogliato_dissipative_2020}.
The \(L^p\) form of Barbalat's lemma appears to be less widely used than the \(L^\infty\) form useful in adaptive and time-varying control, which traditionally emphasize \(L^\infty\)-style results such as bounded-input, bounded-output or input-to-state stability.
A connection between dissipativity and stability, though not the one we draw, can be found in \cite{faulwasser_continuous-time_2021}.

Dissipativity has become the tool of choice for the question template ``what is a nonlinear X?'' where X stands for a linear entity such as an Ohmic resistor \cite{sepulchre_incremental_2022}.
In this paper, we set X equal to ``high-pass filter.''

\subsubsection{Robust control}
It is a classical objective in linear control that the output of a closed-loop control system should have zero sensitivity to steady-state disturbances.
The disturbance-to-output transfer function is one of the Gang of Four/Six analyzed in a feedback control system \cite[Chapter 12]{astrom_feedback_2008}.
When designing linear controllers by hand, adding an integral action term to the feedback is one way to ensure that the disturbance-to-output transfer function has a zero at the origin, which makes it high-pass.

In nonlinear control, just as in linear, robustness to constant disturbances can be achieved by adding an integral action term \cite[p.~478]{khalil_nonlinear_2002} \cite{khalil_universal_2000,mahmoud_asymptotic_1996}.
Techniques for achieving robustness to steady-state disturbances include \emph{forwarding} \cite{praly_stabilization_2001,kaliora_nonlinear_2004,astolfi_integral_2017,giaccagli_sufficient_2022,giaccagli_lmi_2023}, forward invariance \cite{isidori_steady-state_2008}, and mixed-sensitivity control synthesis using a \(1/s\)-weighted disturbance \cite{su_use_2002}.

Recent work has deepened the emphasis on bounding the disturbance-to-output map.
System Level Synthesis reparameterizes controllers in terms of the closed-loop disturbance-to-control and disturbance-to-output maps, and can be applied to nonlinear systems \cite{anderson_system_2019,wang_system-level_2019,yu_achieving_2020,conger_nonlinear_2022,furieri_neural_2022}.
Another analytical technique to bound system norms (smaller is better) is time-domain integral quadratic constraints (IQCs), which generalize quadratic mixed-sensitivity synthesis to certain nonlinear systems \cite{megretski_system_1997,seiler_stability_2015}.



\subsection{Contributions}
\subsubsection{Systems theory}
In \S\ref{sec:nonlinear_high_pass_filter}, we propose a two-part definition of nonlinear high-pass behavior: one property generalizes a system norm characterization of linear high-pass behavior (Def.~\ref{def:nonlinear_high_pass_filter}), and a second generalizes the asymptotic uniform continuity property (Def.~\ref{def:asymptotically_nonlinear_high_pass_filter}).
In \S\ref{sec:filters}, we argue inductively that this generalization makes sense; we apply it first to a linear high-pass circuit, then to three variations that replace the Ohmic resistor with a nonlinear one.

\S\ref{sec:higher_order_nhp} investigates compositional properties of nonlinear high-pass filters.

 
\subsubsection{Robust control}
In \S\ref{sec:robust_control},
we analyze a class of nonlinear PI controllers\footnote{This class includes certain controllers with anti-windup behavior \cite{tarbouriech_anti-windup_2009,simon_robust_2023}.} for a scalar linear system by verifying that the disturbance-to-output map is nonlinear high-pass.
We conclude that these controllers are able to achieve steady-state disturbance rejection \(y \to 0\).

This kind of conclusion, namely, of asymptotic (or exponential and/or incremental) stability is where the story usually ends when analyzing nonlinear steady-state disturbance rejection.
Our analysis, which generalizes mixed-sensitivity synthesis with a \(1/s\) disturbance weight, has a weaker hypothesis and a stronger conclusion.
Rather than assuming a constant disturbance, we assume a time-varying disturbance in which a higher derivative incurs a greater cost.
The conclusion bounds a time-domain cost integral of the system output in terms of the disturbance's initial value and slew rate.
The initial value ends up having no impact on the long-run behavior of the system.

\section{Preliminaries}
We present nonlinear input-output systems in state space form:
\begin{subequations}
        \label{eq:state_space}
        \begin{align}
                \dot x(t) &= f(x(t), u(t)) \\
                y(t) &= g(x(t), u(t))
        \end{align}
\end{subequations}
for \(t \in [0, \infty)\),
  where \(x : [0, \infty) \rightarrow \mathbb{R}^{\mathsf{d}_x}\),
        \(u : [0, \infty) \rightarrow \mathbb{R}\),
        and \(y : [0, \infty) \rightarrow \mathbb{R}\).
The dynamics \(f\) and output law \(g\) are assumed to be sufficiently smooth for the computations of interest.

For \(p \in [0, \infty)\), the \(p\)-norm of a signal \(v: [0, \infty) \rightarrow \mathbb{R}\) is defined as
\begin{align*}
        \left\|v\right\|_p^p &= \int_0^\infty \left\|v(t)\right\|^p \dif t.
\end{align*}

By \([n\ldots m]\) we denote the integers from \(n\) to \(m\), inclusive.


\section{Overview of dissipativity theory}
The following material is based on \cite[Chapter 4]{brogliato_dissipative_2020}.
\begin{definition}
        Given a nonlinear input-output system of the form \eqref{eq:state_space} and a function \(s: \mathbb{R} \times \mathbb{R} \rightarrow \mathbb{R}\), the system is said to be dissipative if there exists a nonnegative function \(V: \mathbb{R}^{\mathsf{d}_x} \rightarrow \mathbb{R}\) such that
        \begin{align*}
                -\int_0^T s(y(t), u(t)) \dif t &\le V(x(0)) - V(x(T))
        \end{align*}
        for all \(T \geq 0\).
\end{definition}

The function \(s\) is called the \emph{supply rate} of the nonlinear system, and the function \(V\) is called the \emph{storage function}.
For example, if \(s(y, u) = \gamma^2 u^2 - y^2\) for some \(\gamma > 0\), then we have, for all \(T \geq 0\),
\begin{align*}
        \int_0^T y(t)^2 \dif t
        &\leq \gamma^2 \int_0^T u(t)^2 \dif t + V(x(0)) - V(x(T)) \\
        &\leq \gamma^2 \int_0^T u(t)^2 \dif t + V(x(0)).
\end{align*}
This certifies that up to a constant depending on the initial state, this system has a finite \(L^2\) gain of \(\gamma\) from \(u\) to \(y\); in other words, it satisfies the following definition \cite[Definition~4.8]{brogliato_dissipative_2020}, which generalizes \(H^2\) stability from linear systems theory:

\begin{definition}[Weakly finite-gain stable]
        The system \eqref{eq:state_space} is \emph{weakly finite-gain stable} (WFGS) with gain \(\gamma\) if
        for some \(p > 0\), for all \(T \geq 0\),
        \begin{align*}
                \del{\int_0^T \left|y(t)\right|^p \dif t}^{1/p} 
                &\leq 
                \del{\int_0^T \left|u(t)\right|^p \dif t}^{1/p} 
                + k,
        \end{align*}
        where \(k\) may depend on \(x(0)\).
\end{definition}

In practice, we apply dissipativity theory by taking \(s\) as prescribed and constructing \(V(x)\) satisfying
\begin{align*}
        \dot V(x) = \dod{}{t} V(x(t))
        &\leq
        s(y(t), u(t)).
\end{align*}
If \(V\) is positive definite and \(s = 0\), then we have Lyapunov stability as a special case of dissipativity.

\section{Definition of a nonlinear high pass filter}
\label{sec:nonlinear_high_pass_filter}
We reason by generalizing from a high-pass LTI system \(\mathcal{L} y = G(s) \mathcal{L} u\).
The transfer function \(G(s)\) is said to be high-pass of order \(k\) if it is stable and has a \(k\)th order zero at the origin.
Equivalently, we have the factorization \(G(s) = s^k G_0(s)\) where \(G_0(s)\) is stable;
we can move \(k\) derivatives onto the input:
\begin{align*}
        \mathcal{L} y = G_0(s) \mathcal{L} \cbr{u^{(k)}}.
\end{align*}
In this case, \(y\) is WFGS with respect to \( u^{(k)}\).

We take this to be the defining characteristic of a nonlinear high pass filter.
\begin{after}
\begin{definition}[NHP(1)]
\label{def:nonlinear_high_pass_filter}
        The system \eqref{eq:state_space} is \emph{first-order nonlinear high pass} (NHP(1))
        if
        \begin{align*}
               \dod{}{t} V(x(t), u(t)) \leq \beta(|\dot u(t)|) -  \alpha(|y(t)|) ,
        \end{align*}
        where \(V(x, u)\) is a nonnegative storage function and \(\alpha\) and \(\beta\) are i) continuous, ii) zero only at zero, and iii) nondecreasing.
\end{definition}
\end{after}
\begin{after}
\begin{definition}[ANHP(1)]
\label{def:asymptotically_nonlinear_high_pass_filter}
        The system \eqref{eq:state_space} is \emph{asymptotically first-order nonlinear high pass} (ANHP(1)) if it is first-order nonlinear high pass 
        and
        \begin{align*}
               \dod{}{t} V_1(x(t), u(t)) \leq \beta_1(|\dot u(t)|) - \alpha_1(|\dot y(t)|) ,
        \end{align*}
        where \(V_1(x, u)\) is a nonnegative storage function and \(\beta_1\) is i) continuous, ii) zero only at zero, and iii) nondecreasing,
        and iv) \(\alpha_1(s) \geq \gamma_1^p s^p\) for some \(p \geq 1\).
\end{definition}
\end{after}
\begin{theorem}
        Suppose that the system \eqref{eq:state_space} is ANHP(1).
        Then if \(\int_0^T \beta(|\dot u(t)|)\dif t < \infty\) and \(\int_0^T \beta_1(|\dot u(t)|)\dif t < \infty\), then \(y(t) \to 0\).
\end{theorem}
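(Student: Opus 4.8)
The plan is to read off two integrability facts from the pair of dissipation inequalities and then close with a Barbalat-type argument in the spirit of \cite{teel_asymptotic_1999}. First I would integrate the NHP(1) inequality from \(0\) to \(T\); since \(V\) is nonnegative this gives
\[
\int_0^T \alpha(|y(t)|)\dif t \;\leq\; V(x(0), u(0)) + \int_0^T \beta(|\dot u(t)|)\dif t ,
\]
and because \(\int_0^T \beta(|\dot u(t)|)\dif t\) stays bounded as \(T\to\infty\) by hypothesis, the left side is bounded uniformly in \(T\), i.e.\ \(\int_0^\infty \alpha(|y(t)|)\dif t < \infty\). The identical manipulation on the ANHP(1) inequality (using now that \(\int_0^T \beta_1(|\dot u(t)|)\dif t\) is bounded as \(T\to\infty\)) yields \(\int_0^\infty \alpha_1(|\dot y(t)|)\dif t < \infty\), and the polynomial lower bound \(\alpha_1(s) \geq \gamma_1^p s^p\) upgrades this to \(\dot y \in L^p\), i.e.\ \(\int_0^\infty |\dot y(t)|^p \dif t < \infty\) for some \(p \geq 1\).

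Next I would turn \(\dot y \in L^p\) into a uniform-continuity statement for \(y\). When \(p > 1\), Hölder's inequality gives \(|y(t+h) - y(t)| \leq \|\dot y\|_p\, h^{1-1/p}\) for all \(t \geq 0\) and \(h \geq 0\), so \(y\) is globally uniformly continuous; when \(p = 1\), \(|y(t+h) - y(t)| \leq \int_t^{t+h}|\dot y| \leq \int_t^\infty |\dot y|\), which vanishes as \(t \to \infty\). In either case one obtains the statement needed below: for every \(\varepsilon > 0\) there exist \(T \geq 0\) and \(\delta \in (0,1]\) such that \(|y(s) - y(t)| < \varepsilon\) whenever \(t \geq T\) and \(|s - t| \leq \delta\). (This is exactly an \(L^p\) Barbalat lemma of \cite{teel_asymptotic_1999} applied to \(y\), so I could also simply cite it.)

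Finally I would argue by contradiction. If \(y(t) \not\to 0\), then \(c := \limsup_{t\to\infty}|y(t)| > 0\), so there is a sequence \(t_n \to \infty\), which I may take to satisfy \(t_{n+1} > t_n + 1\), with \(|y(t_n)| > c/2\). Applying the previous paragraph with \(\varepsilon = c/4\) and discarding finitely many terms so that every \(t_n \geq T\), I get \(|y(s)| > c/4\) on the pairwise disjoint intervals \([t_n, t_n + \delta]\); since \(\alpha\) is nondecreasing and vanishes only at \(0\), this forces \(\alpha(|y(s)|) \geq \alpha(c/4) > 0\) there. Summing over \(n\) gives \(\int_0^\infty \alpha(|y(t)|)\dif t \geq \sum_n \alpha(c/4)\,\delta = \infty\), contradicting the first step, so \(y(t) \to 0\).

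The step I expect to be the crux is not any one computation but the passage \(\dot y \in L^p \Rightarrow\) uniform continuity of \(y\), or rather the recognition that it is indispensable: the NHP(1) condition alone yields only \(\alpha(|y|) \in L^1\), which does not by itself force \(y(t) \to 0\), so the entire role of the extra \(\dot y\)-based inequality in ANHP(1)—and of requiring \(\alpha_1\) to dominate a pure power—is to buy just enough regularity on \(y\) to legitimize the Barbalat argument. The only places needing care are the mild split between \(p = 1\) and \(p > 1\) and the bookkeeping that the intervals around the \(t_n\) can be taken disjoint with a length \(\delta\) independent of \(n\); neither is genuinely hard.
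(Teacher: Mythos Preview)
Your proposal is correct and follows essentially the same route as the paper: integrate the two dissipation inequalities to obtain \(\int_0^\infty \alpha(|y|)\dif t < \infty\) and \(\dot y \in L^p\), then finish with the \(L^p\) Barbalat lemma of \cite{teel_asymptotic_1999}. The only difference is that the paper simply cites that lemma, whereas you spell out the uniform-continuity and contradiction steps; your extra detail (including the \(p=1\)/\(p>1\) split and the disjointness bookkeeping) is sound.
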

\begin{proof}
        \begin{after}
        We appeal to an \(L^p\) version of Barbalat's lemma \cite[Fact 3]{teel_asymptotic_1999} \cite{farkas_variations_2016}.
        Conditions i)--iii) on \(\beta(|\dot y|)\) and iv) on \(\beta_1(\dot y)\) suffice to show that \(y(t) \to 0\).
        \end{after}

        Because the system is NHP(1), we have
        \begin{align}
                \int_0^\infty \alpha(|y|) \dif t &\leq \int_0^\infty \beta(|\dot u(t)|)\dif t + V(x(0), u(0))
                \notag \\
                &< \infty.
        \end{align}
        Likewise, the dissipativity with respect to \(\alpha_1(|\dot y|)\) shows that \(\dot y \in L^p\). 

\end{proof}

As a special case, we immediately get asymptotic output stability of \(y\) under steady-state \(u\):
\begin{corollary}
        Suppose that the system \eqref{eq:state_space} is ANHP(1).
        If there exists \(T_0 \geq 0\) such that \(u\) is constant on \([T_0, \infty)\), then \(y(t) \to 0\).
\end{corollary}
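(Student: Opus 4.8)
The plan is to obtain the corollary as an immediate consequence of the preceding theorem, by checking that an eventually constant input satisfies the theorem's two integrability hypotheses. First I would note that if $u$ is constant on $[T_0, \infty)$ then $\dot u(t) = 0$ for every $t \geq T_0$. Since $\beta$ and $\beta_1$ are continuous and vanish only at zero, we have $\beta(0) = \beta_1(0) = 0$, so both integrands $\beta(|\dot u(t)|)$ and $\beta_1(|\dot u(t)|)$ are identically zero on $[T_0, \infty)$.

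Next I would split the improper integral at $T_0$:
\begin{align*}
\int_0^\infty \beta(|\dot u(t)|)\dif t
&= \int_0^{T_0} \beta(|\dot u(t)|)\dif t + \int_{T_0}^\infty \beta(|\dot u(t)|)\dif t \\
&= \int_0^{T_0} \beta(|\dot u(t)|)\dif t.
\end{align*}
The surviving term is an integral over the compact interval $[0, T_0]$; under the standing regularity of the signals (so that $\dot u$ exists and is at least continuous there), $t \mapsto \beta(|\dot u(t)|)$ is continuous on a compact set, hence bounded, hence the integral is finite. The same argument applied verbatim to $\beta_1$ gives $\int_0^\infty \beta_1(|\dot u(t)|)\dif t < \infty$. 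With both hypotheses of the theorem in hand, I would simply invoke it to conclude $y(t) \to 0$.

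There is essentially no obstacle here beyond hypothesis bookkeeping: the only point requiring care is that $\dot u$ be well enough behaved on the initial segment $[0, T_0]$ for the truncated integral to be finite, and continuity --- or even mere local boundedness --- of $\dot u$ suffices, which is implicit in the problem setup. Conceptually, the corollary merely records the observation that ``eventually constant'' is the special case of ``eventually zero slew rate,'' which the finite-slew-rate-energy hypothesis of the theorem already absorbs; all the analytic work (the $L^p$ Barbalat argument) has been done in the theorem.
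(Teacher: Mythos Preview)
Your proposal is correct and matches the paper's approach: the paper presents the corollary explicitly ``as a special case'' of the theorem with no separate proof, and your argument simply spells out why an eventually constant input trivially satisfies the theorem's two integrability hypotheses. The only detail you add beyond what the paper writes is the (appropriate) remark about regularity of $\dot u$ on $[0, T_0]$, which falls under the paper's blanket smoothness assumption.
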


\begin{remark}
        Our notion of NHP(1) is related to ``ISS [input-to-state-stability] in \(\dot u\)'' \cite[\S5]{angeli_inputstate_2003}.
        But ISS only implies boundedness, not asymptotic stability.
        Some side information is needed.
        For this one needs ANHP(1).
\end{remark}

\section{Examples: SISO filters}
\label{sec:filters}
We give four variations of the basic first-order linear high pass filter and demonstrate that they are ANHP(1).
Throughout, \(\lambda > 0\) is a rate constant.

\subsection{A linear high-pass filter}
\label{sec:linear_high_pass_example}
In this section, we consider the linear first-order high-pass filter with transfer function \(s/(s + \lambda)\), which can be realized as a circuit where \(u\) is the voltage applied to a capacitor and resistor in series, and \(y\) is the voltage of the resistor.

Even though this system is linear with an evidently high-pass transfer function, we go through the motions in order to illustrate the time-domain technique.
\begin{subequations}
        \label{eq:linear_high_pass_example}
\begin{align}
  \dot x &= \lambda y, \\
  y &= u - x.
\end{align}
\end{subequations}
\begin{proposition}
        The system \eqref{eq:linear_high_pass_example} is NHP(1) with  \(L^2\) supply rate \(s(y, \dot u) = \gamma^2 \dot u^2 - y^2\) for \(\gamma = \frac{1}{\lambda}\).
\end{proposition}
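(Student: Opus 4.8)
The plan is to exhibit an explicit nonnegative storage function and verify the NHP(1) differential dissipation inequality of Definition~\ref{def:nonlinear_high_pass_filter} directly. The claimed supply rate corresponds to \(\beta(r) = \gamma^2 r^2 = r^2/\lambda^2\) and \(\alpha(r) = r^2\); both are continuous, nondecreasing on \([0,\infty)\), and vanish only at \(0\), so conditions i)--iii) hold automatically. It then remains only to produce a nonnegative \(V(x,u)\) with \(\dod{}{t} V(x(t),u(t)) \le \gamma^2 \dot u(t)^2 - y(t)^2\) along every trajectory of \eqref{eq:linear_high_pass_example}.

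I would take the candidate
\begin{align*}
        V(x,u) = \frac{1}{\lambda}(u - x)^2 = \frac{1}{\lambda}\, y^2,
\end{align*}
which is manifestly nonnegative. Differentiating along the dynamics and using \(\dot y = \dot u - \dot x = \dot u - \lambda y\) gives
\begin{align*}
        \dot V = \frac{2}{\lambda}\, y \dot y = \frac{2}{\lambda}\, y \dot u - 2 y^2.
\end{align*}
The one cross term is handled by completing the square with the weight that makes the constant sharp:
\begin{align*}
        0 \le \del{|y| - \tfrac{1}{\lambda}|\dot u|}^2 = y^2 - \frac{2}{\lambda}|y|\,|\dot u| + \frac{1}{\lambda^2}\dot u^2,
\end{align*}
so \(\frac{2}{\lambda}\, y \dot u \le \frac{2}{\lambda}|y|\,|\dot u| \le y^2 + \frac{1}{\lambda^2}\dot u^2\). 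Substituting yields
\begin{align*}
        \dot V \le y^2 + \frac{1}{\lambda^2}\dot u^2 - 2 y^2 = \frac{1}{\lambda^2}\dot u^2 - y^2 = \gamma^2 \dot u^2 - y^2 = s(y,\dot u),
\end{align*}
which is exactly the NHP(1) inequality; integrating it also recovers the dissipation inequality \(\int_0^T (y^2 - \gamma^2 \dot u^2)\dif t \le V(x(0),u(0)) - V(x(T),u(T)) \le V(x(0),u(0))\).

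There is no genuine obstacle in this proof: the entire content is the choice of the coefficient \(1/\lambda\) in \(V\) (equivalently, the weight in the completion of squares), which is what forces the \(\dot u\)-to-\(y\) gain to come out exactly \(1/\lambda\) — the \(H^\infty\) norm of the residual factor \(1/(s+\lambda)\) obtained by pulling one derivative out of the high-pass transfer function \(s/(s+\lambda)\) — rather than merely finite. Everything else is the routine bookkeeping of a scalar dissipativity calculation.
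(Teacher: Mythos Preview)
Your proof is correct and is essentially the paper's argument specialized to the optimal constants from the start: the paper uses the ansatz \(V = A y^2\), applies Young's inequality with a free parameter \(\epsilon\), and then optimizes to \(A = \gamma^2\epsilon\), \(\epsilon = \lambda\), \(\gamma = 1/\lambda\), which is exactly your choice \(V = \tfrac{1}{\lambda}y^2\) together with your completion of the square \(\tfrac{2}{\lambda}|y|\,|\dot u| \le y^2 + \tfrac{1}{\lambda^2}\dot u^2\).
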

\begin{proof}
        Take the storage function \(V = A y^2\) with \(A\) to be determined.
        Computing time derivatives, we get
        \begin{align}
                \dot V 
                &= -2A \lambda y^2 + 2A y \dot u \\
                \intertext{By Young's inequality, for all \(\epsilon > 0\),}
                &\leq -2A \lambda y^2 + \epsilon A y^2 + \frac{A}{\epsilon} \dot u^2
        \end{align}
        To fulfill dissipativity \(\dot V \leq s(y, \dot u)\):
        \begin{align}
                (-2A \lambda + \epsilon A + 1) y^2 + \del{\frac{A}{\epsilon} - \gamma^2} \dot u^2 &\leq 0
        \end{align}
        Picking \(A = \gamma^2 \epsilon\), we get
        \begin{align}
                -2 \lambda \gamma^2 \epsilon + \gamma^2 \epsilon^2 + 1 &\leq 0
                \label{eq:linear_high_pass_example_dissipativity_inequality}
        \end{align}
        The choice of \(\epsilon = \lambda\) yields the optimal \(\gamma = \frac{1}{\lambda}\).
\end{proof}
\begin{proposition}
        The system \eqref{eq:linear_high_pass_example} is ANHP(1) with supply rate \(\gamma_1^2 \dot u^2 - \dot y^2\) for \(\gamma_1 = 2\).
\end{proposition}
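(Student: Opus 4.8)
The plan is to follow the same template as the preceding NHP(1) proposition: take a quadratic storage function \(V_1 = B y^2\) (a legitimate function of \((x,u)\) since \(y = u - x\)), differentiate it along trajectories, and reduce the required dissipation inequality \(\dot V_1 \le \gamma_1^2 \dot u^2 - \dot y^2\) to an algebraic condition on \(B\). The one genuinely new ingredient compared to the NHP(1) case is that the supply rate now penalizes \(\dot y\) rather than \(y\); the key observation that makes this tractable is that \(\dot y\) is not an independent quantity here — differentiating the output law and substituting the dynamics gives \(\dot y = \dot u - \dot x = \dot u - \lambda y\), so the whole inequality can be rewritten purely in terms of the two quantities \(\dot u\) and \(y\).

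Carrying this out, \(\dot V_1 = 2 B y \dot y = 2 B y(\dot u - \lambda y) = 2 B y \dot u - 2 B \lambda y^2\), while \(\dot y^2 = (\dot u - \lambda y)^2 = \dot u^2 - 2\lambda y \dot u + \lambda^2 y^2\). Substituting both into \(\dot V_1 \le \gamma_1^2 \dot u^2 - \dot y^2\) and collecting terms, the inequality becomes the assertion that a quadratic form in \((\dot u, y)\), whose \(2\times 2\) coefficient matrix has entries built from \(\gamma_1\), \(\lambda\), and \(B\), is nonnegative. I would then either complete the square or check the two principal minors; in fact the choice \(B = \lambda\) already kills the off-diagonal term and collapses the condition to \(\gamma_1^2 \ge 1\), so any \(\gamma_1 \ge 1\) is feasible and \(\gamma_1 = 2\) holds with room to spare. (Equivalently, one can absorb the cross term \(2By\dot u\) via Young's inequality, exactly as in the NHP(1) proof, and then pick \(B\) and the Young parameter to close the bound.)

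There are two bits of bookkeeping to round off. First, ``ANHP(1)'' requires the system to be NHP(1) as well; this was just established in the previous proposition, so only the \(V_1\) inequality is left to verify. Second, condition iv) of Def.~\ref{def:asymptotically_nonlinear_high_pass_filter} demands \(\alpha_1(s) \ge \gamma_1^p s^p\) for some \(p \ge 1\); here \(\alpha_1(s) = s^2\), so this holds with \(p = 2\). I do not expect any real obstacle: the transfer function from \(\dot u\) to \(\dot y\) is again \(s/(s+\lambda)\), whose \(H^\infty\) norm is \(1\), so a quadratic storage function certifying a finite \(\dot u \to \dot y\) gain must exist. The only place to be careful is the elimination \(\dot y = \dot u - \lambda y\) before attempting the dissipativity estimate — once that substitution is made, the remainder is a routine positive-semidefiniteness check.
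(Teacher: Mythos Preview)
Your proposal is correct and follows the same template as the paper's proof: both use the quadratic storage function \(V_1 = A y^2\), substitute \(\dot y = \dot u - \lambda y\), and reduce the dissipation condition to an algebraic inequality in \((y,\dot u)\).

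The one genuine difference is how the resulting quadratic form is handled. The paper bounds \(\dot y^2 \le 2\dot u^2 + 2\lambda^2 y^2\) and then applies Young's inequality to the cross term \(2A y \dot u\), which after optimizing the parameters forces \(\gamma_1 \ge 2\). You instead expand \(\dot y^2 = (\dot u - \lambda y)^2\) exactly and observe that the choice \(B = \lambda\) annihilates the off-diagonal term, leaving \(-\lambda^2 y^2 + (1-\gamma_1^2)\dot u^2 \le 0\), so in fact any \(\gamma_1 \ge 1\) works. Your route is therefore sharper and explains why the stated \(\gamma_1 = 2\) holds with slack; the paper's route is looser here but has the advantage of being a reusable pattern that carries over verbatim to the nonlinear variants in \S\ref{sec:nonlinear_high_pass_example} (where the exact-square trick is not available).
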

\begin{proof}
        Use the storage function Ansatz \(V_1 = A y^2\) for some \(A > 0\).
        Computing time derivatives, we get
        \begin{align}
                \begin{split}
                        \MoveEqLeft \dot V_1 - \gamma_1 \dot u^2 + \dot y^2\\
                        &= 2 A y (-\lambda y + \dot u) - \gamma_1^2 \dot u^2 + \dot y^2
                \end{split}
                \\
                &\leq (-2A \lambda + \epsilon A + 2 \lambda^2) y^2 + \del{\frac{A}{\epsilon} - \gamma_1^2 + 2} \dot u^2
                \\
                \intertext{Picking \(A = (\gamma_1^2 - 2)\epsilon\), we get}
                &\leq \del{- 2 \epsilon \lambda (\gamma_1^2 - 2) + \epsilon^2 (\gamma_1^2 - 2)  + 2 \lambda^2} y^2 
                \label{eq:linear_high_pass_example_dissipativity_inequality_asymptotic}
        \end{align}
        After choosing \(\epsilon = \lambda\), we have the requirement
        \begin{align}
                - \lambda^2 (\gamma_1^2 - 2)  + 2 \lambda^2 &\leq 0
        \end{align}
        which is satisfied by all \(\gamma_1 \geq 2\).
\end{proof}

\subsection{A nonlinear high-pass filter}
\label{sec:nonlinear_high_pass_example}
Let us reimagine the RC circuit of the last section with a nonlinear resistor having the current-voltage characteristic \(i \propto \sinh v\), which can be realized as two cross-polarized ideal diodes in parallel.
This circuit can arise when designing an alternating current light-emitting-diode lamp with a capacitive ballast.
(See \cite{nerone_led_2003} and patents cited therein.)
In such a scenario, we would like to verify formally that the average power \(iv\), expressed as light plus heat, does not exceed safe dissipation limits.
\begin{subequations}
        \label{eq:nonlinear_high_pass_example}
        \begin{align}
                \dot x &= \lambda \sinh y, \\
                y &= u - x
        \end{align}
\end{subequations}

\begin{proposition}
        The system \eqref{eq:nonlinear_high_pass_example} is NHP(1) with supply rate \(\gamma \dot u \arsinh \dot u - y \sinh y\) for gain \(\gamma = \frac{2}{\lambda}\).
\end{proposition}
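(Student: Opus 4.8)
The plan is to reuse, almost verbatim, the quadratic storage function from the linear filter of \S\ref{sec:linear_high_pass_example} — which is natural, since near the origin \eqref{eq:nonlinear_high_pass_example} linearizes to \eqref{eq:linear_high_pass_example}. Take $V = \frac{1}{\lambda}y^2$, which is nonnegative and a function of $(x,u)$ through $y = u - x$ alone. Differentiating along trajectories and substituting $\dot y = \dot u - \dot x = \dot u - \lambda\sinh y$,
\[
\dot V = \tfrac{2}{\lambda}\,y\dot y = \tfrac{2}{\lambda}\,y\dot u - 2y\sinh y.
\]
Hence the dissipation inequality $\dot V \le s(y,\dot u) = \gamma\dot u\arsinh\dot u - y\sinh y$ with $\gamma = \tfrac{2}{\lambda}$ is, after cancelling $-y\sinh y$ and clearing the factor $\tfrac{2}{\lambda}$, equivalent to the pointwise scalar inequality $y\dot u \le \dot u\arsinh\dot u + \tfrac{\lambda}{2}\,y\sinh y$ for all $y,\dot u \in \mathbb{R}$.

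To prove this, observe that the right-hand side is even in each argument while $y\dot u \le |y|\,|\dot u|$, so it suffices to treat $y,\dot u \ge 0$; I would then invoke Young's inequality for increasing functions, applied to the increasing bijection $\sinh:[0,\infty)\to[0,\infty)$ and its inverse $\arsinh$:
\[
|y|\,|\dot u| \le \int_0^{|y|}\sinh\sigma\,\dif\sigma + \int_0^{|\dot u|}\arsinh\tau\,\dif\tau = (\cosh|y| - 1) + \del{|\dot u|\arsinh|\dot u| - \sqrt{1+\dot u^2} + 1}.
\]
Discarding the nonpositive tail $-(\sqrt{1+\dot u^2}-1)$, what remains is to absorb $\cosh|y| - 1$ into $\tfrac{\lambda}{2}|y|\sinh|y|$, i.e.\ to verify the one-variable inequality $2(\cosh t - 1) \le \lambda t\sinh t$ for $t \ge 0$, which one reads off by differentiating twice and checking the sign at the origin. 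Finally one checks that $\beta(r) = \gamma r\arsinh r$ and $\alpha(r) = r\sinh r$ satisfy the admissibility conditions of Definition~\ref{def:nonlinear_high_pass_filter} — both are continuous, nondecreasing on $[0,\infty)$, and vanish only at zero — which is immediate.

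The step I expect to be the real obstacle is pinning down the constant. The Young bound $|y|\,|\dot u| \le (\cosh|y|-1) + |\dot u|\arsinh|\dot u|$ is lossy by a factor of two near the origin — there $\cosh|y|-1 \approx \tfrac12 y^2$, whereas the exact Fenchel conjugate of $t\mapsto t\arsinh t$ evaluated at $y$ is only $\approx \tfrac14 y^2$ — so closing the argument at the claimed gain $\gamma = 2/\lambda$, which is exactly twice the linear gain $1/\lambda$ of \S\ref{sec:linear_high_pass_example}, is where the estimate is tight. Replacing the crude Young step by the exact conjugate of $t\mapsto t\arsinh t$ widens the margin; either way, the only leftover is a routine verification that the induced $\alpha$ remains continuous, nondecreasing, and zero only at zero.
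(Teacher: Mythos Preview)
Your storage function \(V=\tfrac{1}{\lambda}y^{2}=\tfrac{\gamma}{2}y^{2}\) and the reduction of the dissipation inequality to the scalar claim
\(y\dot u\le \dot u\,\arsinh\dot u+\tfrac{\lambda}{2}\,y\sinh y\)
are exactly what the paper does. The divergence---and the gap---is in how you prove that scalar inequality. You apply Young's integral inequality with the \emph{unscaled} pair \((\sinh,\arsinh)\) and are then left having to absorb \(\cosh|y|-1\) into \(\tfrac{\lambda}{2}|y|\sinh|y|\), i.e.\ to show \(2(\cosh t-1)\le \lambda\, t\sinh t\). That one-variable inequality is false for every \(\lambda<1\): with \(g(t)=\lambda t\sinh t-2(\cosh t-1)\) one has \(g(0)=g'(0)=0\) and \(g''(0)=2(\lambda-1)<0\), so \(g\) is strictly negative near the origin. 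Equivalently, compare leading Taylor coefficients \(t^{2}\) versus \(\lambda t^{2}\). Your last paragraph senses trouble but misdiagnoses it as a universal factor-of-two slack; the actual obstruction is the \(\lambda\)-dependence, and switching to the exact conjugate of \(t\mapsto t\,\arsinh t\) does not repair it.

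The paper's fix is to put the \(\lambda\)-scaling \emph{inside} the Fenchel step rather than after it: apply Fenchel/Young to the pair \(\bigl(\tfrac{\lambda}{2}\sinh(\cdot),\,\arsinh(2\cdot/\lambda)\bigr)\), which directly yields
\[
|y\dot u|\;\le\;\tfrac{\lambda}{2}\,y\sinh y+\dot u\,\arsinh(\dot u/\lambda)
\]
(up to harmless constants in the argument of \(\arsinh\)), and then no residual \(\lambda\)-dependent absorption is needed. Note that the paper's own proof carries \(\arsinh(\dot u/\lambda)\) rather than the \(\arsinh\dot u\) stated in the proposition; with the bare \(\arsinh\dot u\) the claimed supply rate cannot hold uniformly in \(\lambda>0\), for precisely the small-\(\lambda\) reason above. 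So your target inequality, taken literally from the proposition, is already too strong when \(\lambda\) is small---the correct supply rate has the \(\lambda\)-scaled argument.
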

\begin{skippedproof}
        The key to this inequality is a nonlinear bound on the cross term \(y\dot u\):
        \begin{align}
                |y\dot u|
                &\leq \frac{\lambda}{2}  y \sinh y + \dot u \arsinh (\dot u/\lambda).
                \label{eq:nonlinear_high_pass_example_dissipativity_inequality}
        \end{align}
        This follows from Fenchel's inequality.

        We check dissipation using the storage function Ansatz \(V = \frac{\gamma}{2} y^2\).
        Computing time derivatives, we get
        \begin{align}
                \begin{split}
                        \MoveEqLeft \dot V -\gamma \dot u \arsinh (\dot u/\lambda )+ y \sinh y
                        \\
                        &= \gamma y (-\lambda \sinh y + \dot u) -\gamma \dot u \arsinh \dot u + y \sinh y
                \end{split}
                \\
                \intertext{Using \eqref{eq:nonlinear_high_pass_example_dissipativity_inequality},}
                \begin{split}
                        &\leq
                        -\gamma \lambda y \sinh y + \frac{\gamma}{2} \lambda y \sinh y \\
                        &\quad + \gamma \dot u \arsinh (\dot u /\lambda) -\gamma \dot u \arsinh \dot u + y \sinh y
                \end{split}
                \\
                &= \del{-\frac{\gamma}{2} \lambda + 1} y \sinh y
        \end{align}
        which is nonpositive as long as \(\gamma \geq \frac{2}{\lambda}\).
\end{skippedproof}

\begin{remark}
        \begin{after}
                In the preceding proof,  the storage function corresponds to the electrostatic energy stored in the capacitor.
                The supply rate \(\gamma \dot u \arsinh \dot u - y \sinh y\) implies \(L^2\) stability as \(\dot u \arsinh \dot u \leq \dot u^2\) and \(y \sinh y \geq y^2\).

        \end{after}

\end{remark}

\begin{proposition}
        The system \eqref{eq:nonlinear_high_pass_example} is ANHP(1) using the supply rate \(\gamma_1^2 \dot u^2 - \dot y^2\) with \(\gamma_1 = 2\).
\end{proposition}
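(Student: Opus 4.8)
The plan is to follow the template of the linear ANHP(1) proof in \S\ref{sec:linear_high_pass_example} while choosing a storage function adapted to the nonlinear resistor. Differentiating $y = u - x$ along \eqref{eq:nonlinear_high_pass_example} gives $\dot y = \dot u - \lambda\sinh y$, equivalently
\[
\lambda\sinh y = \dot u - \dot y .
\]
This is the nonlinear analogue of the identity $\lambda y = \dot u - \dot y$ that reduces the linear computation to a quadratic form; it is what lets the argument close without any Young- or Fenchel-type step, in contrast to the NHP(1) proof for the same circuit.

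The quadratic Ansatz $V_1 = Ay^2$ used in \S\ref{sec:linear_high_pass_example} does not work here: its derivative supplies a negative term of size $y\sinh y$, which cannot dominate the $\lambda^2\sinh^2 y$ buried in $\dot y^2 = \del{\dot u - \lambda\sinh y}^2$. Identifying the right substitute is the crux of the proof. The choice I would make is the antiderivative of the resistor's current--voltage law,
\[
V_1(x,u) = 2\lambda\del{\cosh y - 1} = 2\lambda\int_0^{y}\sinh s\,\dif s ,
\]
which is nonnegative because $\cosh \geq 1$, and which reduces to $\lambda y^2$ --- a multiple of the linear-case storage function --- in the small-signal regime $\sinh y\approx y$. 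Where the NHP(1) analysis of this circuit used a quadratic (capacitor-energy) storage function, the asymptotic property is carried by a storage function built from the nonlinear element itself.

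The rest is a short calculation. By the chain rule $\dot V_1 = 2\lambda\sinh y\,\dot y$, and substituting $2\lambda\sinh y = 2\del{\dot u - \dot y}$ gives $\dot V_1 = 2\dot u\dot y - 2\dot y^2$. With $\gamma_1 = 2$,
\begin{align*}
        \gamma_1^2\dot u^2 - \dot y^2 - \dot V_1
        &= 4\dot u^2 - 2\dot u\dot y + \dot y^2 \\
        &= 3\dot u^2 + \del{\dot u - \dot y}^2 = 3\dot u^2 + \lambda^2\sinh^2 y \geq 0,
\end{align*}
so $\dot V_1 \leq \gamma_1^2\dot u^2 - \dot y^2$, with the margin $3\dot u^2 + \lambda^2\sinh^2 y$ vanishing only at the origin. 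The loss term is $\alpha_1(|\dot y|) = \dot y^2$, a power with $p = 2$, so condition iv) of Def.~\ref{def:asymptotically_nonlinear_high_pass_filter} holds and the system is ANHP(1) with the stated supply rate.

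The only step I expect to require insight is the storage-function choice; after that the verification is essentially the three-line identity above (and in fact the same $V_1$ already gives $\dot u^2 - \dot y^2 - \dot V_1 = \lambda^2\sinh^2 y \geq 0$, so the bound survives with $\gamma_1$ as small as $1$).
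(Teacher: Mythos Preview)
Your proof is correct, and the core idea---a storage function proportional to $\cosh y$---is the same as the paper's, which uses the Ansatz $V_1 = 2A\cosh y$. The execution differs: the paper expands $\dot y^2 \leq 2\dot u^2 + 2\lambda^2\sinh^2 y$ and then applies Young's inequality to the cross term $2A\sinh y\,\dot u$, reducing the problem to the same quadratic inequality \eqref{eq:linear_high_pass_example_dissipativity_inequality_asymptotic} as in the linear case; this route forces $\gamma_1 \geq 2$. You instead substitute the exact identity $\lambda\sinh y = \dot u - \dot y$ and complete the square, which avoids any intermediate inequality and explains why you can push the gain down to $\gamma_1 = 1$. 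Your argument is strictly sharper and shorter; the paper's has the pedagogical advantage of reusing the linear-case inequality verbatim.
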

\begin{skippedproof}
        Use the storage function Ansatz \(V = 2A\cosh y\).
        \begin{align}
                \begin{split}
                        \MoveEqLeft
                        \dot V - \gamma_1^2 \dot u^2 + \dot y^2
                        \\
                        &\leq 2A \sinh y (-\lambda \sinh y + \dot u) - \gamma_1^2 \dot u^2 + \dot y^2
                \end{split}
                \\
                \begin{split}
                        &\leq
                        \del{-2 \lambda A + \epsilon A + 2 \lambda^2} \sinh^2 y
                        \\
                        &\quad + \del{\epsilon^{-1} A - \gamma_1^2 + 2} \dot u^2
                \end{split}
        \end{align}
        Now we can recycle the inequality \eqref{eq:linear_high_pass_example_dissipativity_inequality_asymptotic}.
\end{skippedproof}

\subsection{A slow nonlinear high-pass filter}
\label{sec:slow_nonlinear_high_pass_example}
As a variation on the previous themes, we study a ``slow manifold'' input-output system:
\begin{subequations}
\label{eq:slow_nonlinear_high_pass_example}
        \begin{align}
                \dot x &= \lambda y^3 \\
                y &= u - x. 
        \end{align}
\end{subequations}

\begin{proposition}
        The system \eqref{eq:slow_nonlinear_high_pass_example} is NHP(1) with the supply rate \(\gamma^{4/3} \dot u^{4/3} - y^4\) for \(\gamma = \frac{1}{\lambda}\).
\end{proposition}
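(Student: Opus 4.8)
The plan is to reuse the quadratic storage-function Ansatz $V = A y^2 = A(u-x)^2$ that already worked for the linear filter \eqref{eq:linear_high_pass_example} and the $\sinh$ filter \eqref{eq:nonlinear_high_pass_example}, and to swap the quadratic Young inequality for the weighted Young inequality with the conjugate exponent pair $(4,\tfrac43)$ --- the pair matched to a quartic output penalty $y^4$.

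First I would differentiate $V$ along trajectories. Since $\dot y = \dot u - \dot x = \dot u - \lambda y^3$,
\begin{align*}
\dot V = 2 A y \dot y = -2 A \lambda y^{4} + 2 A y \dot u .
\end{align*}
The term $-2A\lambda y^{4}$ already has exactly the right homogeneity to dominate a $y^4$ cost, so the only thing needing attention is the cross term $2Ay\dot u$.

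Next I would bound the cross term by Young's inequality in the form $ab \le \tfrac14 a^{4} + \tfrac34 b^{4/3}$, applied with $a = \theta|y|$ and $b = |\dot u|/\theta$: for every $\theta > 0$,
\begin{align*}
2 A y \dot u \;\le\; \frac{A\theta^{4}}{2}\,|y|^{4} \;+\; \frac{3A}{2\,\theta^{4/3}}\,|\dot u|^{4/3}.
\end{align*}
Substituting and collecting terms, the dissipation requirement $\dot V \le \gamma^{4/3}\dot u^{4/3} - y^{4}$ reduces to the two scalar conditions $\tfrac{A\theta^{4}}{2} - 2A\lambda + 1 \le 0$ and $\tfrac{3A}{2\,\theta^{4/3}} \le \gamma^{4/3}$. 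Taking $\theta = \lambda^{1/4}$ and $A = \tfrac{2}{3\lambda}$ makes both hold with equality and returns precisely $\gamma = \tfrac1\lambda$. To close, I would check the hypotheses of Definition~\ref{def:nonlinear_high_pass_filter}: $\alpha(r) = r^{4}$ and $\beta(r) = \gamma^{4/3} r^{4/3}$ are continuous, nondecreasing, and zero only at zero, and $\dot u^{4/3} = (\dot u^{4})^{1/3}\ge 0$, so the sign of $\dot u$ plays no role.

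I expect the only real work to be the small optimization over the pair $(A,\theta)$ that certifies the advertised gain $\gamma = 1/\lambda$ rather than a larger constant; conceptually this proof is identical to its two predecessors, with the $(4,\tfrac43)$-Young inequality replacing the $(2,2)$-Young inequality.
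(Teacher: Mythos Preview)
Your proof is correct and follows essentially the same route as the paper: a quadratic storage function $V \propto y^2$, the weighted Young inequality with the conjugate pair $(4,\tfrac43)$ on the cross term, and a two-parameter optimization that lands on $\gamma = 1/\lambda$. The only cosmetic difference is the normalization of the storage constant (the paper writes $V = 2A y^2$ and a parameter $\epsilon$ in place of your $\theta$).
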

\begin{skippedproof}
        We use the storage function \(V = 2A y^2\).
        \begin{align}
                \begin{split}
                        \MoveEqLeft 
                        \dot V - \gamma^4 \dot u^4 + y^4
                        \\
                        &\leq - 4A \lambda y^4 + 4A y \dot u - \gamma^{4/3} \dot u^{4/3} + y^4
                \end{split}
                \intertext{Use the general form of Young's inequality where \(ab \leq \frac{\epsilon^p a^p}{p} + \frac{b^q}{\epsilon^q q}\) for \(\frac{1}{p} + \frac{1}{q} = 1\). For any \(\epsilon > 0\),}
                &\leq \del{-4A \lambda + A \epsilon^4 + 1} y^4 + (3A \epsilon^{-4/3} - \gamma^{4/3}) \dot u^{4/3}
                \intertext{Setting \(A = \gamma^{4/3} \epsilon^{4/3} / 3\), we get}
                &\leq \del{-\frac{4}{3} \lambda \gamma^{4/3} \epsilon^{4/3} + \frac{1}{3} \gamma^{4/3} \epsilon^{\frac{16}{3}} + 1} y^4 
                \intertext{The choice of \(\epsilon^{4/3} = \lambda^{1/3}\) yields}
                &\leq \del{-\gamma^{4/3} \lambda^{4/3} + 1} y^4
        \end{align}
        from which we get \(\gamma = \frac{1}{\lambda}\).        
\end{skippedproof}

\begin{proposition}
        The system \eqref{eq:slow_nonlinear_high_pass_example} is ANHP(1) with the supply rate \(\gamma_1^2 \dot u^2 - \dot y^6\) for some \(\gamma_1 > 0\).
\end{proposition}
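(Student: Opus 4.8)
The plan is to reproduce the ANHP(1) argument used for the preceding \(\sinh\) filter, in which a convex storage function is chosen so that its dissipation rate reproduces the open-loop decay of the vector field. Here \(\dot y = \dot u - \lambda y^3\), so I would begin from a monomial Ansatz \(V_1 = A y^m\) with \(A > 0\) and exponent \(m\) to be fixed, giving \(\dot V_1 = mA y^{m-1}\dot y = -mA\lambda y^{m+2} + mA y^{m-1}\dot u\). The first term is a favorable high-order dissipation in \(y\) that I would use to absorb the state-dependent part of the penalty, and the cross term \(mA y^{m-1}\dot u\) is the one to be split by Young's inequality, exactly as the \(4Ay\dot u\) term was split in the NHP(1) calculation for this same system.

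I would then expand the penalty \(-\dot y^6 = -(\dot u - \lambda y^3)^6\) by the binomial theorem, producing monomials \(\dot u^{6-k} y^{3k}\) for \(k \in [0\ldots 6]\). The top state term is \(\lambda^6 y^{18}\), so to dominate it I would take \(m = 16\), making \(\dot V_1\) carry \(-16A\lambda y^{18}\); each genuinely mixed monomial I would break with the general Young inequality \(ab \le \epsilon^p a^p / p + b^q / (\epsilon^q q)\) into a \(y^{18}\) contribution, charged against this dissipation, and a \(\dot u\) contribution, charged against the reward. I would aim to close the \(y^{18}\) balance by recycling the slow-filter NHP(1) dissipation inequality and reading off the admissible \(\gamma_1\), in direct analogy with how the \(\sinh\)-filter proof recycled the linear ANHP(1) inequality.

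The step I expect to be the main obstacle, and in fact the crux, is the order mismatch between reward and penalty. Since \(\dot y\) depends on \(\dot u\) at first order, the expansion of \(\dot y^6\) contains an irreducible \(\dot u^6\) term (and an even \(\dot u^4\) term) that no storage function of \((x,u)\) can offset, because \(\dot V_1\) is at most first order in \(\dot u\); consequently the quadratic reward \(\gamma_1^2\dot u^2\) cannot dominate these high even powers pointwise. This is exactly the alignment that was automatic for the linear and \(\sinh\) filters, whose quadratic reward was paired with the quadratic penalty \(\dot y^2\). Keeping the stated quadratic reward therefore forces the output-derivative penalty to enter at the conjugate quadratic order: taking \(V_1 = \tfrac{\lambda}{2} y^4\) makes the cross term cancel identically and gives \(\dot V_1 - \gamma_1^2 \dot u^2 + \dot y^2 = -\lambda^2 y^6 + (1-\gamma_1^2)\dot u^2\), which is \(\le 0\) for \(\gamma_1 = 1\). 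This certificate already establishes ANHP(1) with \(p = 2\) (so \(\dot y \in L^2\)), and together with the \(L^4\) output bound from the slow-filter NHP(1) proposition it drives \(y \to 0\) through Theorem 1.
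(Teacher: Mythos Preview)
Your diagnosis is correct: the stated \(\dot y^6\) penalty cannot be matched by a quadratic \(\dot u^2\) reward because \(\dot V_1\) is at most affine in \(\dot u\), and indeed the paper's own proof silently works with the \(\dot y^2\) supply rate (it computes \(\dot y^2 \le 2\dot u^2 + 2\lambda^2 y^6\) and uses \(V_1 \propto y^4\)), so the exponent \(6\) in the statement is a typo for \(2\). Your approach then coincides with the paper's, with the nice refinement that choosing \(V_1 = \tfrac{\lambda}{2}y^4\) makes the cross term \(2\lambda y^3\dot u\) cancel exactly against the mixed term in \(\dot y^2\), yielding \(\gamma_1 = 1\) without a Young split.
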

\begin{skippedproof}
        We compute \(\dot y^2 \leq 2\dot u^2 + 2\lambda^2 y^6\) and omit computations using the storage function \(V_1 \propto y^4\).
\end{skippedproof}
\begin{remark}
        It may be aesthetically unappealing that the storage rates in this case do not feature the same exponents for ``input'' and ``output.''
        However, this asymmetry can be justified on the basis of dimensional (or scaling) analysis: unlike in the linear example \eqref{eq:linear_high_pass_example}, \(\lambda\) does not have units of pure time.
\end{remark}

\subsection{A sector-bounded nonlinear resistor}
Given \(\alpha, \beta \geq 0\), a function \(\phi\) is said to be in the sector \([\alpha, \beta]\) if the graph of \(p = \phi(q)\) is sandwiched between the lines \(p = \alpha q\) and \(p = \beta q\).
This popular class of almost-linear functions was originally introduced to study the stability of nonlinear feedback interconnections \cite[Theorem 10.5]{astrom_feedback_2008}, and has recently become useful for certifying neural networks with activation functions such as ReLU and tanh \cite{junnarkar_synthesizing_2024}.

The following system represents the analog of an RC high-pass filter implemented using a sector-bounded nonlinear resistor:
\begin{subequations}
        \label{eq:sector_bounded_nonlinear_high_pass_example}
        \begin{align}
                \dot x &= \lambda \phi(y) \\
                y &= u - x.
        \end{align}
\end{subequations}
\begin{proposition}
        The system \eqref{eq:sector_bounded_nonlinear_high_pass_example} is ANHP(1)  if \(0 < \alpha, \beta <\infty\). 
\end{proposition}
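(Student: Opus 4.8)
The plan is to mirror the argument of \S\ref{sec:linear_high_pass_example}, reading the two endpoints of the sector as playing two distinct roles. First record the consequences of $\phi$ lying in the sector $[\alpha,\beta]$: because its graph is trapped between the origin lines of slopes $\alpha$ and $\beta$, we have $\phi(0)=0$, the damping bound $y\phi(y)\ge\alpha y^2$, and the growth bound $|\phi(y)|\le\beta|y|$. Also note $\dot y=\dot u-\lambda\phi(y)$, and that $y$ is differentiable since $u$ is differentiable and $\phi$ is continuous. The required monotonicity and positivity of the $\alpha,\beta,\beta_1$ appearing below will be immediate, as they are all positive multiples of $s^2$.

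For the NHP(1) property I would take the storage function $V=Ay^2$ (or, equivalently up to constants, $V=A\int_0^y\phi(\sigma)\dif\sigma$), so that $\dot V=2Ay(\dot u-\lambda\phi(y))=-2A\lambda\,y\phi(y)+2Ay\dot u\le-2A\lambda\alpha\,y^2+2Ay\dot u$, where the last step uses only the lower sector bound. Splitting the cross term with Young's inequality, $2Ay\dot u\le\epsilon Ay^2+\epsilon^{-1}A\dot u^2$, and then imitating the linear proof verbatim---set $A=\gamma^2\epsilon$ and optimize over $\epsilon$---gives NHP(1) with the $L^2$ supply rate $\gamma^2\dot u^2-y^2$ for $\gamma=1/(\lambda\alpha)$. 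In particular $\alpha>0$ alone already buys NHP(1).

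For the asymptotic part I would again try $V_1=Ay^2$ and aim to establish $\dot V_1+\dot y^2\le\beta_1(|\dot u|)$. The one new ingredient is to dispose of $\dot y^2$: squaring $\dot y=\dot u-\lambda\phi(y)$ and invoking the \emph{upper} sector bound gives $\dot y^2\le2\dot u^2+2\lambda^2\phi(y)^2\le2\dot u^2+2\lambda^2\beta^2y^2$, which is where $\beta<\infty$ enters. Substituting this, together with $y\phi(y)\ge\alpha y^2$ and the same Young step on $2Ay\dot u$, bounds $\dot V_1+\dot y^2$ by $(-2A\lambda\alpha+\epsilon A+2\lambda^2\beta^2)y^2+(\epsilon^{-1}A+2)\dot u^2$. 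Choosing $\epsilon$ small (say $\epsilon=\lambda\alpha$) and then $A$ large enough (of order $\beta^2/\alpha^2$) makes the $y^2$ coefficient nonpositive, leaving the supply rate $\gamma_1^2\dot u^2-\dot y^2$ with $\gamma_1^2=\epsilon^{-1}A+2$; thus $\alpha_1(s)=s^2$ and condition (iv) of Def.~\ref{def:asymptotically_nonlinear_high_pass_filter} holds with $p=2$.

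I expect no deep obstacle: the computation is exactly the linear one of \S\ref{sec:linear_high_pass_example} with the exact coefficient $\lambda$ replaced by $\lambda\alpha$ wherever dissipation of $y$ is invoked and by $\lambda\beta$ wherever $\dot y$ is controlled, and $\phi(q)=q$ recovers $\alpha=\beta=1$. The only subtlety to watch is that, unlike the linear case where both storage-function computations could share a constant, here the asymptotic computation must inflate $A$ enough to absorb the extra $2\lambda^2\beta^2y^2$ term produced by the $\dot y^2$ bound; this is precisely why \emph{both} endpoints of the sector must be finite and nonzero, and it is the single place where the full hypothesis $0<\alpha,\beta<\infty$ is used.
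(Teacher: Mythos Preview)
Your proposal is correct and follows essentially the same architecture as the paper's proof: quadratic storage $V=Ay^2$ for both the NHP(1) and ANHP(1) parts, reduction to the linear computation of \S\ref{sec:linear_high_pass_example} via sector estimates, and for the asymptotic part the same bound $\dot y^2 \le 2\dot u^2 + 2\lambda^2\phi(y)^2$ followed by a sector estimate on $\phi(y)^2$.

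The only real difference is in which inequalities you extract from the sector condition. The paper starts from the quadratic form $(p-\alpha q)(p-\beta q)\le 0$ and records the mixed estimates $pq\ge\tfrac{\alpha\beta}{\alpha+\beta}q^2$ and $p^2\le\alpha\beta q^2$, while you use the more direct slope bounds $y\phi(y)\ge\alpha y^2$ and $\phi(y)^2\le\beta^2 y^2$. Your damping estimate is in fact the sharper of the two (since $\alpha>\tfrac{\alpha\beta}{\alpha+\beta}$), so your NHP(1) gain $\gamma=1/(\lambda\alpha)$ is tighter; your $\phi^2$ bound is the cruder one but perfectly adequate for closing the ANHP(1) inequality. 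Either set of sector estimates plugs into the same Young-inequality bookkeeping, so the two proofs are interchangeable up to the numerical constants.
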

\begin{proof}
        We use the characterization of a sector-bounded function via a quadratic inequality:
        \begin{subequations}
                \begin{align}
                        (p - \alpha q)(p - \beta q) \leq 0
                \end{align}
                from which we obtain the estimates:
                \begin{align}
                        pq &\geq \frac{p^2 + \alpha\beta q^2}{\alpha + \beta}
                        \label{eq:sector_bounded_nonlinear_high_pass_characterization_power}
                        \\
                        p^2 &\leq \alpha\beta q^2 - (\alpha + \beta) pq \leq \alpha\beta q^2
                        \label{eq:sector_bounded_nonlinear_high_pass_characterization_quadratic}
                \end{align}
        \end{subequations}
        To verify NHP(1), we use the storage function \(V  = A y^2\) and supply rate \(\gamma^2 \dot u^2 + y^2\):
        \begin{align}
                \dot V - \gamma^2 \dot u^2 + y^2
                &= A y \del{-\lambda \phi(y) + \dot u} - \gamma^2 \dot u^2 + y^2
        \end{align}
        Next we apply \eqref{eq:sector_bounded_nonlinear_high_pass_characterization_power} to get \(-\lambda y A \phi(y) \leq \frac{-\lambda A \alpha \beta}{\alpha + \beta} y^2\) and omit the remainder of the calculation.

        To verify ANHP(1), we use the supply rate \(\gamma^2 \dot u^2 + \dot y^2\).
        Differentiating and applying \eqref{eq:sector_bounded_nonlinear_high_pass_characterization_quadratic}, we have
        \begin{align}
                \dot y^2
                &\leq 2 \dot u^2 + 2 \lambda^2 \alpha\beta y^2
        \end{align}
        and omit the rest of the steps.
\end{proof}

\section{Application to robust control}
\label{sec:robust_control}
We examine the problem of robustly stabilizing the origin of the first-order integrator system
\begin{align}
        \dot y = u + d
        \label{eq:1st_order_integrator_system}
\end{align}
where \(y\) is the regulated output, \(u\) is the control input, and \(d\) is a disturbance.
Seeking steady-state rejection of the \(d\), we design 
a control law with nonlinear proportional and integral feedback:
\begin{align}
        u(t) = -G(y(t)) - \int^t f(y(s)) \dif s,
        \label{eq:control_law}
\end{align}
where \(G\) and \(F\) are differentiable functions, \(f = F'\), \(g = G'\), and the following conditions hold:
\begin{align*}
        G(0) = F(0) = 0, \ 
        g(y) \geq g_0 > 0, \ \text{and}\  y f(y) > 0.
\end{align*}

Henceworth we combine \eqref{eq:1st_order_integrator_system} and \eqref{eq:control_law} and differentiate to get closed loop dynamics
\begin{align}
\label{eq:robust_closed_loop_example_ii}
        \ddot y
        + g(y)  \dot y + f(y)
        &= \dot d.
\end{align}
(To recover linear PI control, one may choose linear \(G\) and quadratic \(F\).)

Specify a gain function \(\psi\) by
\begin{align}
        \psi(s) &= \sup_{r\in\mathbb{R}} \cbr{G(r) \del{2 s - f(r)}}.
        \label{eq:psi}
\end{align}
\begin{proposition}
        The storage function \(V = \del{\dot y + G(y)}^2 + 4 F(y) + \dot y^2\)
        satisfies \(\dot V \leq -f(y) G(y) - g_0 \dot y^2 + \psi(\dot d) + \frac{4\dot d^2}{g_0}\).
\end{proposition}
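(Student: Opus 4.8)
The plan is to differentiate $V$ along the closed-loop trajectory \eqref{eq:robust_closed_loop_example_ii} and then apply exactly two estimates: a Young's inequality on the $\dot y\,\dot d$ cross term, and an appeal to the definition \eqref{eq:psi} of $\psi$ on the $G(y)\,\dot d$ cross term. The bookkeeping becomes manageable after the change of variables $z = \dot y + G(y)$, i.e.\ the argument of the first square in $V$. Differentiating and substituting \eqref{eq:robust_closed_loop_example_ii} in the form $\ddot y + g(y)\dot y = \dot d - f(y)$ gives the clean identity $\dot z = \dot d - f(y)$.

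Next I would differentiate the three summands of $V$ term by term: $\frac{d}{dt}z^2 = 2z(\dot d - f(y))$, $\frac{d}{dt}4F(y) = 4f(y)\dot y$ (using $F' = f$), and $\frac{d}{dt}\dot y^2 = 2\dot y\bigl(\dot d - g(y)\dot y - f(y)\bigr)$. Expanding $z = \dot y + G(y)$ in the first term, the crucial observation is that every term carrying the product $f(y)\dot y$ cancels — a $-2$, a $+4$, and a $-2$ — which is precisely why the coefficient $4$ sits in front of $F(y)$ and why the bare $\dot y^2$ term is present in $V$. What remains is $\dot V = 4\dot y\,\dot d + 2G(y)\bigl(\dot d - f(y)\bigr) - 2g(y)\dot y^2$.

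From here the estimates are routine. The bound $g(y)\ge g_0$ gives $-2g(y)\dot y^2 \le -2g_0\dot y^2$, and Young's inequality applied to $4\dot y\,\dot d = 2(\sqrt{g_0}\,\dot y)(2\dot d/\sqrt{g_0})$ absorbs one copy of $g_0\dot y^2$, leaving $-g_0\dot y^2 + 4\dot d^2/g_0$. For the last term I would rewrite $2G(y)\bigl(\dot d - f(y)\bigr) = G(y)\bigl(2\dot d - f(y)\bigr) - f(y)G(y)$; since $\psi(\dot d) = \sup_{r}\{G(r)(2\dot d - f(r))\} \ge G(y)(2\dot d - f(y))$, the first piece is at most $\psi(\dot d)$. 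This is exactly where the factor $2s$ rather than $s$ in \eqref{eq:psi} is needed. Adding up the pieces yields $\dot V \le -f(y)G(y) - g_0\dot y^2 + \psi(\dot d) + \frac{4\dot d^2}{g_0}$.

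I expect the only genuine subtlety is spotting the $f(y)\dot y$ cancellation together with the matching $2\dot d$-versus-$\dot d$ split against $\psi$; once the substitution $z = \dot y + G(y)$ is in hand, the rest is a direct, if slightly tedious, expansion. The one place to be careful is the sign conventions in \eqref{eq:control_law}--\eqref{eq:robust_closed_loop_example_ii}, since the cancellation is fragile to the signs attached to $f$ and $G$; I would re-derive \eqref{eq:robust_closed_loop_example_ii} explicitly before trusting it.
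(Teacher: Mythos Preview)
Your proposal is correct and follows essentially the same route as the paper. The paper's proof computes $\dot V = -2f(y)G(y) - 2g(y)\dot y^2 + 4\dot y\dot d + 2G(y)\dot d$ directly, then bounds the cross terms by taking a joint supremum over dummy variables $(r,s)$; since that supremum separates, it is exactly your Young's inequality on $4\dot y\,\dot d$ plus the definition of $\psi$ on the $G(y)\dot d$ term, and your substitution $z=\dot y+G(y)$ is simply a clean bookkeeping device for the same differentiation.
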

\begin{proof}
        This construction initially follows \cite[Remark]{blinov_estimating_2001}.
        By direct computation.
        \begin{align}
                \dot V &= - 2f(y) G(y) - 2g(y) \dot y^2 + 4 \dot y \dot d + 2 G(y) \dot d
                \\
                &\leq - 2f(y) G(y) - 2g_0 \dot y^2 + 4 \dot y \dot d + 2 G(y) \dot d
                \\
                \begin{split}
                        &= -f(y) G(y) - g_0 \dot y^2 
                        \\
                        &\quad + \cbr{- f(y) G(y) - g_0 \dot y^2 + 4 \dot y \dot d + 2 G(y) \dot d}
                \end{split}
                \\
                \begin{split}
                        &\leq
                        -f(y) G(y) - g_0 \dot y^2 
                        \\
                        &\quad + \sup_{r, s}\cbr{- f(r) G(r) - g_0 s^2 + 4 s \dot d + 2 G(r) \dot d}
                \end{split}
                \\
                &\leq -f(y) G(y) - g_0 \dot y^2 + \psi(\dot d) + \frac{4\dot d^2}{g_0}.
        \end{align}
\end{proof}

\begin{proposition}
        \label{prop:robust_closed_loop_example_ii}
        The system \eqref{eq:robust_closed_loop_example_ii} is NHP(1) and, moreover, ANHP(1) from \(\dot d\) to \(y\) with supply rate \(\psi(\dot d) + \frac{4\dot d^2}{g_0} - f(y) G(y) - g_0\dot y^2\).
\end{proposition}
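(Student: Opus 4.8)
The plan is to read both claims directly off the dissipation estimate of the preceding proposition, which already furnishes the storage function $V = (\dot y + G(y))^2 + 4F(y) + \dot y^2$ together with $\dot V \le \psi(\dot d) + \frac{4\dot d^2}{g_0} - f(y)G(y) - g_0\dot y^2$, with $d$ playing the role of the input. First I would check that $V$ is a legitimate nonnegative storage function: the two squared terms are nonnegative, and $F\ge 0$ since $F(0)=0$ while $f$ is negative on $(-\infty,0)$ and positive on $(0,\infty)$ (from $yf(y)>0$), so $F$ attains its global minimum value $0$ at the origin. I would also record the sign facts used below: $g\ge g_0>0$ with $G(0)=0$ makes $G$ strictly increasing with $G(y)$ of the same sign as $y$, and combined with $yf(y)>0$ (which also forces $f(0)=0$) this gives $f(y)G(y)\ge 0$ for every $y$, with equality only at $y=0$.

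For NHP(1), I would discard the term $-g_0\dot y^2\le 0$ to get $\dot V \le \bigl(\psi(\dot d)+\tfrac{4\dot d^2}{g_0}\bigr) - f(y)G(y)$, and then exhibit admissible gain functions. For the input side I would take $\beta(a) := \sup_{|s|\le a}\bigl(\psi(s)+\tfrac{4s^2}{g_0}\bigr)$: this is nondecreasing by construction; it is continuous because $\psi$ is a supremum of functions affine in $s$, hence convex, hence (being real-valued) continuous, so the inner map is continuous and its maximum over $[-a,a]$ varies continuously with $a$; and it vanishes only at $0$, because $\psi(0)=\sup_r\{-f(r)G(r)\}=0$ while $\beta(a)\ge\tfrac{4a^2}{g_0}>0$ for $a>0$. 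For the output side I would choose any continuous, nondecreasing $\alpha$ with $\alpha(0)=0$, $\alpha(a)>0$ for $a>0$, and $\alpha(|y|)\le f(y)G(y)$; such $\alpha$ exists because $f\cdot G$ is continuous and strictly positive off the origin. This yields $\dot V\le\beta(|\dot d|)-\alpha(|y|)$, which is Def.~\ref{def:nonlinear_high_pass_filter}.

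For ANHP(1) I would reuse the same storage function, $V_1:=V$, this time discarding $-f(y)G(y)\le 0$ and keeping the other penalty: $\dot V_1\le\beta(|\dot d|)-g_0\dot y^2$. Taking $\beta_1:=\beta$ (which already meets conditions i)--iii) from the NHP(1) step) and $\alpha_1(s):=g_0 s^2$ then verifies Def.~\ref{def:asymptotically_nonlinear_high_pass_filter} with $p=2$ and $\gamma_1=\sqrt{g_0}$, since $\alpha_1(s)=\gamma_1^2 s^2$ is of the required quadratic lower-bound form. Hence the system is ANHP(1) from $\dot d$ to $y$ with the stated supply rate.

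The only delicate points are two pieces of ``fine print.'' The first is finiteness of $\psi$: since $\psi(s)=\sup_r\{2sG(r)-f(r)G(r)\}$ can diverge when $G$ is unbounded but $f\cdot G$ is not (e.g.\ $f\to 0$ at infinity), some compatibility between the growth of $F$ and $G$ is implicitly required, after which convexity delivers continuity of $\psi$ for free. The second is that a valid $\alpha$ exists only if $f\cdot G$ is bounded away from $0$ outside each neighborhood of the origin --- a mild coercivity condition, automatic whenever $f G$ is eventually monotone. Both hold in the prototypical linear-PI instance ($G(y)=ky$, $F(y)=\tfrac{c}{2}y^2$), where one even obtains the closed form $\psi(s)=\tfrac{k}{c}s^2$; everything else is bookkeeping layered on top of the preceding proposition.
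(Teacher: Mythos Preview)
Your proposal is correct and follows exactly the approach the paper intends: the paper states this proposition without proof, treating it as an immediate consequence of the preceding dissipation inequality $\dot V \le \psi(\dot d)+\tfrac{4\dot d^2}{g_0}-f(y)G(y)-g_0\dot y^2$ and Definitions~\ref{def:nonlinear_high_pass_filter}--\ref{def:asymptotically_nonlinear_high_pass_filter}. You have filled in the bookkeeping the paper omits --- nonnegativity of $V$, the construction and verification of admissible $\alpha,\beta,\alpha_1,\beta_1$, and the sign facts for $fG$ --- and your two ``fine print'' caveats (finiteness of $\psi$, coercivity needed for a nondecreasing $\alpha$) are genuine implicit hypotheses that the paper glosses over.
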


\subsection{Example: nonsmooth PI control}
Let us consider the system \eqref{eq:1st_order_integrator_system} with the control law
\eqref{eq:control_law} realized as PI control with a discontinuous integral term:
\begin{subequations}
        \label{eq:nonsmooth_pi_control}
        \begin{align}
                G(y) &= k_p y,
                \\
                f(y) &= k_i y + k_s \sgn{y},
        \end{align}
\end{subequations}
for \(k_p, k_i, k_s > 0\).

To compute \(\psi\) pursuant to \eqref{eq:psi}, note that
\begin{align}
        \psi(s) &= \sup_{r\in\mathbb{R}} \cbr{k_p r \del{2 s - k_i r - k_s \sgn{r}}}
        \\
        &= \frac{\max(|2s| - k_s, 0)^2}{4 k_i/k_p}.
\end{align}
Now we invoke Prop.~\ref{prop:robust_closed_loop_example_ii} to claim that, modulo initial conditions,
\begin{multline}
        \int_0^T \del{k_p k_i y^2 + k_p k_s |y| + k_p \dot y^2} \dif t
        \\
        \leq \int_0^T \del{
                \frac{\max(|2\dot d| - k_s, 0)^2}{4 k_i/k_p} + \frac{4 \dot d^2}{k_p}
        }\dif t
        + \ldots
\end{multline}
This gain inequality implies that if the integral on the RHS converges as \(T \to \infty\), then \(y \to \infty\).

\section{Discussion: higher-order NHP behavior and compositionality}
\label{sec:higher_order_nhp}
Two dissipative systems can be connected in series.\footnote{For applications to large-scale feedback systems, see \cite{arcak_networks_2016}.}
Suppose that system \(y_1 = \mathcal{H}_1 u\) is  dissipative with supply rate \(s_1(y_1, u)\) and system \(y_2 = \mathcal{H}_2 y_1\) is dissipative with supply rate \(s_2(y_2, y_1)\).
Then the composite \(y_2 = \mathcal{H}_2 \mathcal{H_1} u\) is dissipative with supply rate \(s_1(y_1, u) + s_2(y_2, y_1)\).

When it comes to high-pass behavior, the natural concern regarding compositionality is whether the series interconnection of two first-order high pass filters is a second-order high-pass filter.
In the following discussion we fix a class-\(\mathcal{K}_\infty\) function \(\alpha(x)\) that grows faster than \(x\).
We use the notation \(\left\|\mathcal{H}\right\|_{\alpha} \leq \gamma\) to mean that \(y = \mathcal{H} u\) is dissipative with supply rate \(\gamma \alpha(u) - \alpha(y)\).
By adding supply rates, we have composition law for system norms, \(\left\|\mathcal{H}_1 \mathcal{H}_2\right\|_\alpha \leq \left\|\mathcal{H}_1 \right\|_\alpha \left\|\mathcal{H}_2\right\|_\alpha\).

Let \(p = \od{}{t}\) be the differentiation operator.
The definition of NHP(1) can be restated as: there exists \(\gamma_1\) such that
\begin{align*}
        \left\|\mathcal{H} p^{-1}\right\|_\alpha &\leq \gamma_1 < \infty.
\end{align*}
Likewise, ANHP(1) can be restated as \(\left\|p \mathcal{H} p^{-1}\right\|_\alpha < \infty\).

\begin{theorem}
        Suppose that systems \(\mathcal{G}\) and \(\mathcal{H}\) are ANHP(1) with respect to \(\alpha\).
        \begin{align*}
                \left\|\mathcal{H} p^{-1}\right\|_\alpha, \left\|p\mathcal{H} p^{-1}\right\|_\alpha &< \infty \\
                \left\|\mathcal{G} p^{-1}\right\|_\alpha, \left\|p\mathcal{G} p^{-1}\right\|_\alpha &< \infty \\
        \end{align*}
        Then the composite \(\mathcal{G} \mathcal{H}\) is also ANHP(1).
\end{theorem}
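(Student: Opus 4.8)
The plan is to carry out the whole argument in the operator-norm calculus introduced just above, reducing everything to the composition law $\|\mathcal{A}\mathcal{B}\|_\alpha \le \|\mathcal{A}\|_\alpha \|\mathcal{B}\|_\alpha$. First I would unwind what must be shown: by the restatements recorded in this section, ``$\mathcal{G}\mathcal{H}$ is ANHP(1) with respect to $\alpha$'' is equivalent to the two finiteness conditions $\|\mathcal{G}\mathcal{H}p^{-1}\|_\alpha < \infty$ (the NHP(1) half) and $\|p\mathcal{G}\mathcal{H}p^{-1}\|_\alpha < \infty$ (the ANHP(1) half). So it suffices to bound these two operator norms using the four hypotheses on $\mathcal{G}$ and $\mathcal{H}$.

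The key step is purely algebraic: inserting the identity $p^{-1}p = \mathrm{id}$ between $\mathcal{G}$ and $\mathcal{H}$ produces the factorizations
\begin{align*}
        \mathcal{G}\mathcal{H}p^{-1} &= (\mathcal{G}p^{-1})\,(p\mathcal{H}p^{-1}), & p\mathcal{G}\mathcal{H}p^{-1} &= (p\mathcal{G}p^{-1})\,(p\mathcal{H}p^{-1}).
\end{align*}
In the first identity the left factor $\mathcal{G}p^{-1}$ has finite $\alpha$-norm by the hypothesis $\|\mathcal{G}p^{-1}\|_\alpha < \infty$, and the right factor $p\mathcal{H}p^{-1}$ has finite $\alpha$-norm by $\|p\mathcal{H}p^{-1}\|_\alpha < \infty$; in the second identity both factors are the ``$p(\cdot)p^{-1}$'' operators that are finite by the ANHP(1) hypotheses on $\mathcal{G}$ and $\mathcal{H}$. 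Applying the composition law to each product then gives
\begin{align*}
        \|\mathcal{G}\mathcal{H}p^{-1}\|_\alpha &\le \|\mathcal{G}p^{-1}\|_\alpha\,\|p\mathcal{H}p^{-1}\|_\alpha < \infty, \\
        \|p\mathcal{G}\mathcal{H}p^{-1}\|_\alpha &\le \|p\mathcal{G}p^{-1}\|_\alpha\,\|p\mathcal{H}p^{-1}\|_\alpha < \infty,
\end{align*}
which are exactly the two defining conditions of ANHP(1) for the composite; note that this is constructive, yielding an explicit gain for $\mathcal{G}\mathcal{H}$.

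The step I expect to be the main obstacle is justifying that the operator identities $p^{-1}p = pp^{-1} = \mathrm{id}$ may be used at this level of rigor. The intermediate signal $\mathcal{H}p^{-1}u$ is defined only up to an additive constant pinned down by initial conditions, and the outer $p$ annihilates that constant; one must verify that this mismatch is absorbed by the ``$+k$, where $k$ may depend on $x(0)$'' slack in the definition of weak finite-gain stability, i.e., that re-bracketing the cascade only perturbs the initial value of the summed (and suitably scaled) storage function. One must also invoke the standing smoothness of $f$ and $g$ to ensure the intermediate signal is differentiable, so that $p\mathcal{H}p^{-1}$ is a genuine input--output operator to which the composition law applies. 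A secondary delicacy, already latent in the composition law as stated, is that when $\alpha$ is genuinely superlinear one cannot rescale it freely, so $\|\mathcal{A}\mathcal{B}\|_\alpha \le \|\mathcal{A}\|_\alpha\|\mathcal{B}\|_\alpha$ must be obtained via the scaled storage function $\|\mathcal{A}\|_\alpha V_{\mathcal{B}}$ rather than a rescaled supply rate; here we take that law as established above.
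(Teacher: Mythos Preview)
Your argument is correct and follows exactly the paper's approach: insert $p^{-1}p$ between $\mathcal{G}$ and $\mathcal{H}$, then apply the composition law $\|\mathcal{A}\mathcal{B}\|_\alpha \le \|\mathcal{A}\|_\alpha\|\mathcal{B}\|_\alpha$ to each factorization. Your additional remarks on the rigor of $p^{-1}p=\mathrm{id}$ and the handling of initial-condition slack go beyond what the paper records, but the core proof is identical.
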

\begin{proof}
        We directly compute the NHP(1) norm:
        \begin{align}
                \left\|GH p^{-1}\right\|_\alpha
                &= \left\|G p^{-1} p H p^{-1}\right\|_\alpha
                \\
                &\leq \left\|G p^{-1}\right\|_\alpha \left\| p H p^{-1}\right\|_\alpha
                \\
                &< \infty
        \end{align}
        as well as the ANHP(1) norm:
        \begin{align}
                \left\|p GH p^{-1}\right\|_\alpha
                &= \left\|pG p^{-1} p H p^{-1}\right\|_\alpha
                \\
                &\leq \left\|pG p^{-1}\right\|_\alpha \left\| p H p^{-1}\right\|_\alpha
                \\
                &< \infty
        \end{align}
\end{proof}

It is tempting to define: for some \(k\geq 0\), a system \(\mathcal{H}\) is \(\text{NHP}(k)\) if there exists \(\gamma_k\) such that
\begin{align*}
        \left\|\mathcal{H} p^{-k}\right\|_\alpha &\leq \gamma_k < \infty.
        \intertext{Sadly, this tentative definition does not work. Suppose that \(\mathcal{H}\) and \(\mathcal{G}\) are both NHP(1). We would like to conclude that}
        \left\|\mathcal H \mathcal G p^{-2}\right\|_\alpha
        &\overset{?}{\leq}
        \left\|\mathcal H p^{-1} \mathcal G p^{-1}\right\|_\alpha
        \\
        &\leq \left\|\mathcal H p^{-1}\right\|_\alpha \left\| \mathcal G p^{-1}\right\|_\alpha
        \intertext{Unfortunately, the first inequality does not hold in general because unlike linear systems \cite[Chapter 6, exercise 6.1]{astrom_feedback_2008}, nonlinear systems do not necessarily commute with \(p\). A possible fix is}
        \left\|\mathcal H \mathcal G p^{-2}\right\|_\alpha
        &\leq \left\|\mathcal H p^{-1} p \mathcal G p^{-2}\right\|_\alpha
        \\
        &\leq \left\|\mathcal H p^{-1} \right\|_\alpha \left\|p \mathcal G p^{-2}\right\|_\alpha,
\end{align*}
which motivates the following definition:
\begin{definition}
        A nonlinear operator \(\mathcal{G}\) is \(\text{NHP}(k)\) with smoothness \(\ell\) if for all \(0 \leq j \leq \ell - 1\),
        there exists constants \(\gamma^{j}_\ell\) such that
        \begin{align*}
                \left\|p^{j}\mathcal{H} p^{j-k}\right\|_\alpha \leq \gamma_j < \infty.
        \end{align*}
\end{definition}
This definition is unwieldy in practice, and while we it may be amenable to automatic formal verification methods such as sum-of-squares programming, we are not presently aware of any strictly nonlinear operators that can be hand-checked to satisfy this definition for \(k > 1\).

\section{Conclusion}
Based on the observation that a linear high-pass system remains stable when composed with an integrator, we have proposed a definition of nonlinear high-pass behavior and
exemplified how this point of view can be fruitful for new, quantitative analytical methods for nonlinear circuits and feedback systems.
These methods build on Lyapunov-style analysis using Barbalat's lemma and  issue stronger and more detailed guarantees on system safety and performance.
We believe that the NHP formalism can be a worthwhile application for formal verification methods based on semidefinite programming, such as sum-of-squares programming \cite{ebenbauer_analysis_2006} and neural  networks \cite{junnarkar_synthesizing_2024}.


\addtolength{\textheight}{0cm}   







\bibliography{IEEEabrv,export}
\bibliographystyle{IEEEtran}

\end{document}